\newtheorem{theorem}{Theorem}
\newtheorem{prop}{Proposition}
\newtheorem{corollary}{Corollary}
\newtheorem{example}{Example}
\newcommand{\BlackBox}{\rule{1.5ex}{1.5ex}}  
\newenvironment{proof}{\par\noindent{\bf Proof:\
}}{\hfill\BlackBox\\[2mm]}
\begin{document}
\title{
Uncertainty Relations for Joint Localizability and 
Joint Measurability in Finite-Dimensional Systems
}

\author{Takayuki Miyadera}
\affiliation{%
Research Center for Information Security, \\
National Institute of Advanced Industrial Science and Technology \\
1-1-1 Umezono, Tsukuba, 
Ibaraki 305-8561 Japan. 
\\
(E-mail: miyadera-takayuki@aist.go.jp)
}%


\date{Revised: 27 June, 2011}

\begin{abstract}
Two quantities quantifying uncertainty relations are examined. 
In J.Math.Phys. {\bf 48}, 082103 (2007), 
Busch and Pearson 
investigated the limitation on joint localizability and joint measurement 
of position and momentum
by introducing overall width and error bar width. 
In this paper, we show a simple relationship between 
these quantities for finite-dimensional systems. 
Our result indicates that if there is a bound on joint localizability, 
it is possible to obtain a similar bound on joint measurability. 
For finite-dimensional systems, uncertainty relations for a pair of general projection-valued measures are obtained as by-products.
\end{abstract}
\pacs{03.65.Ta}
\maketitle
\section{Introduction}
The uncertainty relation 
is one of the 
most fundamental features of quantum theory. 
This relation is often represented symbolically by an 
inequality
$\delta A \cdot \delta B \geq \mbox{`noncommutativity'}
$. 
Despite its simplicity, 
this inequality
has two distinct interpretations. One is related to the joint 
localizability of states with respect to observables $A$ and $B$. 
This property has been well investigated thus far. 
Depending on the measures for characterizing  
the (un)sharpness of distributions, 
their quantitative bounds can be represented by several inequalities. 
The most prominent one is Robertson's inequality \cite{Robertson}
employing standard deviation as the measure. 
The less prominent but important ones include 
the entropic uncertainty relation \cite{Deutsch,MU,KP} and the Landau-Pollak uncertainty 
relation \cite{MU,MiyaLP}.  
Another interpretation is related to the joint measurability of 
observables $A$ and $B$. 
That is, there is no measurement process that achieves joint measurement 
of noncommutative observables. 
Although this property
dates back to Heisenberg's paper 
\cite{Heisenberg} and 
has been utilized often in heuristic arguments, 
only a few rigorous quantitative representations 
have been obtained thus far. 
While investigating joint measurement of the position and momentum of a quantum mechanical particle, Appleby \cite{Appleby} introduced error operators and disturbance operators and derived simple inequalities between them.
Ozawa \cite{OzawaUncertain} treated a pair of general self-adjoint 
observables and 
considered a trade-off relation between 
his error operator and disturbance operator that 
have an interpretation in the context of his 
extended notion of 
joint measurement. Werner \cite{Werner} formulated the problem 
operationally and derived an inequality between position and 
momentum. Busch and Pearson \cite{Buscherrorbar} introduced the concept
 of error bar width 
that has a clear meaning operationally, 
and discussed its trade-off relation between position and momentum. 
Miyadera and Imai \cite{MiyaHeisen},  while employing
 simple distances between a pair of 
probability distributions, derived a trade-off inequality 
between a pair of general discrete positive-operator-valued measures (POVMs). 
Busch, Heinonen, and Lahti submitted a review on this topic \cite{BuschUncertainty}. 
\par
In this paper, we study the relationship between joint localizability and 
joint measurability. We employ the formulation introduced 
by Busch and Pearson \cite{BuschUncertainty}. 
In their study of a quantum mechanical particle, 
quantities called overall width 
and error bar width played important roles. 
The overall width of a probability distribution represents 
 the width of its localization. For any state, the overall widths of its
position representation and momentum representation 
cannot be very small simultaneously. 
The error bar width measures the quality of approximate joint measurement. 
Busch and Pearson considered a general observable $M$ to jointly 
measure position $Q$ and momentum $P$ approximately. 
The error bar width of $M$ with respect to $Q$ quantifies how well 
$M$ approximates $Q$. It represents the width of the region into which an outcome 
of $M$ can fall when a state is prepared sharply with respect to $Q$. 
It was shown that 
the error bar width of $M$ with respect to $Q$ and 
that of $M$ with respect to $P$ cannot be very small 
simultaneously. 
Interestingly, 
while characterizing 
distinct properties, the overall width and error bar width
satisfy inequalities whose forms 
are identical. 
We examine this correspondence for a general pair of 
projection-valued measures (PVMs)
in finite-dimensional Hilbert spaces.  
In the next section, we present a brief review of the formulation introduced 
by Busch and Pearson. In section \ref{mainsection}, we describe our main 
result, which shows a simple relationship between overall width and error bar 
width. Our result implies that if there is a bound on joint localizability, 
it is possible to obtain a similar bound on joint measurability. 
We derive uncertainty relations for joint localizability and joint 
measurability as by-products. 
In section \ref{others}, we discuss other measures for joint measurement. 
We conclude the paper with a short discussion. 
\section{Formulation}
\subsection{Approximate joint localization}
Position and momentum cannot be strictly localized simultaneously 
in a quantum state.
That is, if a state $\rho$ is strictly bounded in 
a position representation, its support in a momentum representation 
cannot be bounded. From an operational viewpoint, this suggests 
the following. 
Suppose we measure position and 
momentum individually. 
If a state shows a strictly localized distribution of the outcomes for 
the position measurement, the same state cannot 
have a strictly localized distribution of the outcomes for the momentum 
measurement. 
This feature of ``no joint localizability" 
holds for a pair of general observables. 
Let us consider a pair of PVMs $A$ on 
a measurable space $\Omega_A$ and $B$ on a measurable space $\Omega_B$. 
If measurable subsets 
$\Delta_A \subset \Omega_A$ and 
$\Delta_B \subset \Omega_B$ satisfy $\Vert A(\Delta_A)B(\Delta_B)\Vert \neq 1$, 
any state $\rho$ cannot satisfy $\mbox{tr}(\rho A(\Delta_A))
=\mbox{tr}(\rho B(\Delta_B))=1$. This result can be easily derived 
by the Landau-Pollak uncertainty relation \cite{Deutsch,MU,KP,MiyaLP}, 
\begin{eqnarray*}
\mbox{tr}(\rho A(\Delta_A))+\mbox{tr}(\rho B(\Delta_B))
\leq 1 +\Vert A(\Delta_A)B(\Delta_B)\Vert. 
\end{eqnarray*}
This impossibility theorem has its quantitative counterpart. 
In \cite{Buscherrorbar}, Busch and Pearson introduced a 
quantity called overall width 
that characterizes the width of localization. 
While they exclusively treated canonical position 
and momentum, their formulation is broad enough to cover 
a wide class of observables. 
Let us consider a probability measure ${\bf p}$ on 
a metric space $(\Omega,d)$, where $\Omega$ is a set and 
$d$ is a metric (distance) defined on it \cite{metric}. 
The overall width (at confidence level $1-\epsilon$) of 
the probability ${\bf p}$ is 
defined for $\epsilon \in [0,1)$ by 
\begin{eqnarray*}
W_{\epsilon}({\bf p}):=
\mbox{inf}\{w>0| 
\exists x\in \Omega: 
{\bf p}(O_d(x,w))\geq 1-\epsilon\}, 
\end{eqnarray*}
where a ball $O_d(x,L)$ for $x\in \Omega$ and 
$L\geq 0$ is defined 
by $O_d(x,L):=\{y\in \Omega| d(x,y)\leq \frac{L}{2}\}$. 
Let us consider a quantum system described by a 
Hilbert space ${\cal H}$. The set of all density operators 
is denoted by ${\cal S}({\cal H})$, and the set of all
bounded operators is denoted by ${\bf B}({\cal H})$.  
For a POVM $F$ on 
a measurable space $\Omega$,  
and for a state $\rho\in {\cal S}({\cal H})$, 
$\rho^F$ denotes the probability measure defined by 
$\rho^F(X):=\mbox{tr}(\rho F(X))$. 
In general, 
for a pair of noncommutative  
PVMs $A$ and $B$, 
$\rho^A$ and $\rho^B$ cannot be too sharp simultaneously. 
This property is expressed by a quantitative trade-off inequality 
between $W_{\epsilon_1}(\rho^A)$ and $W_{\epsilon_2}(\rho^B)$ 
for small $\epsilon_1, \epsilon_2 \in [0,1]$. 
In fact, Busch, Heinonen and Lahti \cite{BuschUncertainty} proved that 
between position $Q$ and momentum $P$ of a particle, 
it holds that
\begin{eqnarray}
W_{\epsilon_1}(\rho^Q)\cdot W_{\epsilon_2}(\rho^P)
\geq 2\pi \hbar \cdot (1-\epsilon_1 -\epsilon_2)^2, 
\label{QPlocal}
\end{eqnarray}
for any state $\rho$ and $\epsilon_1, \epsilon_2>0$ with $\epsilon_1+\epsilon_2\leq 1$, 
where $\Omega_P=\Omega_Q={\bf R}$ is equipped with Euclidean distance 
$d(x,y)=|x-y|$. 
\subsection{Approximate joint measurement}
Position and momentum cannot be measured simultaneously, 
which was first identified by Heisenberg using a Gedankenexperiment. 
Busch and Pearson formulated the limitation for 
joint measurement of position and momentum in terms of 
an error bar. 
Their formulation is based on the following 
operationally relevant idea. 
To estimate the quality of a measurement 
device, one may prepare a state that is well localized 
with respect to an ideal observable and then measure 
an approximate observable to detect the magnitude of error. 
The precise definition is as follows. 
Suppose $E$ is a PVM on a metric space $(\Omega,d)$.  
Let $E_1$ be a POVM 
on $(\Omega,d)$. For each $\epsilon\in [0,1]$, $\delta> 0$,  
the error of $E_1$ relative to $E$ is 
defined by 
\begin{eqnarray*}
{\cal W}_{\epsilon,\delta}(E_1,E)
:= \mbox{inf}
\{w>0| \forall x\in \Omega, \forall \rho \in {\cal S}({\cal H}): 
\rho^E(O_d(x,\delta))=1 \Rightarrow 
\rho^{E_1}(O_d(x,w))\geq 1-\epsilon\}.
\end{eqnarray*}
The error is an increasing function of $\delta$. 
Busch and Pearson defined the error bar width 
of $E_1$ relative to $E$ as 
\begin{eqnarray*}
{\cal W}_{\epsilon}(E_1,E)
:=\mbox{inf}_{\delta} {\cal W}_{\epsilon,\delta}
(E_1,E)
=\lim_{\delta \to 0}
{\cal W}_{\epsilon,\delta}(E_1,E).
\end{eqnarray*}
\par
To discuss an approximate joint measurement of 
a pair of PVMs $A$ on $\Omega_A$ and $B$ on $\Omega_B$, 
we consider a POVM $M$ defined on 
$\Omega_A\times \Omega_B$. 
$M$ naturally defines a POVM $M_1$ (resp. $M_2$) on $\Omega_A$ (resp. $\Omega_B$) by 
$M_1(\Delta):=M(\Delta \times \Omega_B)$ 
(resp. $M_2(\Delta'):=M(\Omega_A\times \Delta')$ ). 
The quality of approximation is characterized by 
${\cal W}_{\epsilon_1}(M_1,A)$ and 
${\cal W}_{\epsilon_2}(M_2,B)$. 
When both values vanish simultaneously for some 
POVM $M$, $A$ and $B$ are said to be jointly measurable. 
Busch and Pearson proved that for the position and momentum 
of a quantum particle, 
\begin{eqnarray}
{\cal W}_{\epsilon_1}(M_1,Q)\cdot 
{\cal W}_{\epsilon_2}(M_2,P) \geq 2\pi \hbar 
(1-\epsilon_1-\epsilon_2)^2 \label{QPerrorbar}
\end{eqnarray}
holds for $\epsilon_1,\epsilon_2>0$ with $\epsilon_1+\epsilon_2\leq 1$. 
\par
For a finite $\Omega$, the error bar width can be expressed without 
introducing $\delta$. In fact, it holds that 
\begin{eqnarray}
{\cal W}_{\epsilon}(E_1,E)
= \mbox{inf}
\{w>0| \forall x\in \Omega, \forall \rho \in {\cal S}({\cal H}): 
\rho^E(\{x\})=1 \Rightarrow 
\rho^{E_1}(O_d(x,w))\geq 1-\epsilon\}.
\label{defdiscrete}
\end{eqnarray}
Because we treat 
observables with finite outcomes in this paper, 
this expression is employed. 
In the following, a POVM on a finite set $\Omega$ is 
written explicitly for each point in $\Omega$. 
That is, by a POVM  $\{A_x\}_{x\in \Omega}$, we mean 
that for each $x\in \Omega$, 
$A_x\geq 0$ is satisfied, and $\sum_{x\in \Omega} A_x={\bf 1}$ holds. 
\section{Localizability and measurability in a finite-dimensional system}
\label{mainsection}
As described above, 
Busch and Pearson showed a trade-off inequality between 
the error bar width 
${\cal W}_{\epsilon_1}(M_1,Q)$ and ${\cal W}_{\epsilon_2}(M_2,P)$.
It should be noted that the form of the bound for joint measurement 
(\ref{QPerrorbar}) is identical with 
that for joint localization (\ref{QPlocal}).
In the following, we show that this type of agreement 
between joint localizability and joint measurement generally  
exists for finite-dimensional Hilbert spaces. 
The following theorem shows a simple relationship between 
the overall width and error bar width. 
On the basis of this theorem, one can see that if there exists a bound on 
joint localizability, a similar bound holds for 
joint measurability. 
\begin{theorem}\label{th1}
Let ${\cal H}$ be an $N$-dimensional Hilbert space. 
Suppose that we have a pair of PVMs $A=\{A_x\}_{x\in \Omega_A}$ 
and $B=\{B_y\}_{y \in \Omega_B}$, where $(\Omega_A,d_A)$ 
and $(\Omega_B,d_B)$ are finite metric spaces.  
We consider a POVM $M=\{M_{xy}\}_{(x,y)\in \Omega_A\times \Omega_B}$ 
to perform an approximate joint measurement of $A$ and $B$.
$M_1$ and $M_2$ denote marginal POVMs obtained from $M$.  
\par
For any $\epsilon_1, \epsilon_2 \in [0,1]$ with $\epsilon_1+\epsilon_2\leq 1$, 
there exists $\rho\in {\cal S}({\cal H})$ satisfying
\begin{eqnarray*}
{\cal W}_{\epsilon_1}(M_1,A)&\geq& W_{\epsilon_1+\epsilon_2}
(\rho^{A}) \\
{\cal W}_{\epsilon_2}(M_2,B)&\geq& W_{\epsilon_1+\epsilon_2}
(\rho^{B}).
\end{eqnarray*}
\end{theorem}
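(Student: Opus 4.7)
The plan is to construct the witness state $\rho$ explicitly as a normalized POVM effect, $\rho = M_{x_0 y_0}/\mbox{tr}(M_{x_0 y_0})$, for a single well-chosen pair $(x_0, y_0) \in \Omega_A \times \Omega_B$, which I will extract by a short averaging argument over the outcome space. The only real inputs are the finite-dimensional reformulation (\ref{defdiscrete}) of the error bar width and the symmetry of the metrics $d_A$, $d_B$.

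The first step is to translate the hypothesis into operator inequalities. Writing $w_i$ for the two error bar widths, $S_x = O_{d_A}(x, w_1)$, and $T_y = O_{d_B}(y, w_2)$, the definition (\ref{defdiscrete}) applied to every state supported on the range of $A_x$ is equivalent to $A_x M_1(S_x^c) A_x \leq \epsilon_1 A_x$, and analogously $B_y M_2(T_y^c) B_y \leq \epsilon_2 B_y$. Taking traces (using that $A_x$ is a projection, so $\mbox{tr}(A_x X) = \mbox{tr}(A_x X A_x)$) gives $\mbox{tr}(A_x M_1(S_x^c)) \leq \epsilon_1 \mbox{tr}(A_x)$ and the $B$-counterpart.

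Next, I sum the $A$-inequality over $x$ and use the symmetry $x' \notin O_{d_A}(x, w_1) \Leftrightarrow x \notin O_{d_A}(x', w_1)$ to swap the roles of the ``ideal label'' and the ``outcome label''. A one-line re-indexing then yields
\begin{eqnarray*}
\sum_{x,y} \mbox{tr}(M_{xy} A(S_x^c)) \leq \epsilon_1 N, \qquad \sum_{x,y} \mbox{tr}(M_{xy} B(T_y^c)) \leq \epsilon_2 N,
\end{eqnarray*}
using $N = \sum_{x,y} \mbox{tr}(M_{xy})$. Reading $p(x,y) := \mbox{tr}(M_{xy})/N$ as a probability distribution on $\Omega_A \times \Omega_B$, these bounds say that the $p$-expectations of the conditional error ratios $\mbox{tr}(M_{xy} A(S_x^c))/\mbox{tr}(M_{xy})$ and $\mbox{tr}(M_{xy} B(T_y^c))/\mbox{tr}(M_{xy})$ are at most $\epsilon_1$ and $\epsilon_2$ respectively; their sum therefore has $p$-expectation $\leq \epsilon_1+\epsilon_2$, so some $(x_0, y_0)$ in the support of $p$ must make both ratios simultaneously $\leq \epsilon_1+\epsilon_2$.

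Setting $\rho := M_{x_0 y_0}/\mbox{tr}(M_{x_0 y_0})$ — a valid state since $M_{x_0 y_0} \geq 0$ with positive trace — one immediately reads off $\rho^A(S_{x_0}) \geq 1 - (\epsilon_1+\epsilon_2)$ and $\rho^B(T_{y_0}) \geq 1 - (\epsilon_1+\epsilon_2)$, hence $W_{\epsilon_1+\epsilon_2}(\rho^A) \leq w_1$ and $W_{\epsilon_1+\epsilon_2}(\rho^B) \leq w_2$, as required. I expect the only non-mechanical ingredient to be the metric-symmetry re-labelling in the double sum; that single swap is what couples the ``error against the ideal label $x$'' to a ``localization around the outcome $x_0$'', and it is ultimately what lets the \emph{same} state $\rho$ inherit localization for both $A$ and $B$ at once. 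Everything else is a direct trace computation and a Markov-style averaging.
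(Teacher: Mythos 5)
Your proof is correct, and it is in essence the paper's own argument stripped of its quantum-information packaging. The paper proceeds via remote state preparation: it introduces an ancilla ${\cal H}_{aux}$, the maximally entangled vector $\Phi_0$, the map $\gamma$, prepares the ensembles $\{p_A(x),\rho_x\}$ and $\{p_B(y),\sigma_y\}$ (both averaging to ${\bf 1}/N$ so the outcome statistics coincide), and then takes as witness the state $\gamma^*(\Theta_{x'y'})$ conditioned on an outcome $(x',y')$ selected by exactly your Markov-type averaging over $p_M(x',y')=\mathrm{tr}(M_{x'y'})/N$. A short computation shows $\gamma^*(\Theta_{x'y'})=M_{x'y'}/\mathrm{tr}(M_{x'y'})$ and $p_{A|M}(x|x',y')=\mathrm{tr}(A_x M_{x'y'})/\mathrm{tr}(M_{x'y'})$, so your witness state, your weights $p(x,y)$, and your ``error ratios'' are numerically identical to the paper's posterior quantities; likewise your metric-symmetry re-indexing is precisely the step the paper invokes when passing from its inequality (\ref{eqpo}) to (\ref{Asiki}). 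What you do differently is to phrase everything as direct operator/trace inequalities ($A_xM_1(S_x^c)A_x\le\epsilon_1A_x$, summed against $\mathrm{tr}$) in the system Hilbert space alone, which makes the finite-dimensional proof shorter and self-contained; what the paper's heavier formulation buys is the conceptual route it later tries to push toward infinite dimensions (via entangled functionals in the discussion section), where a normalized $\Phi_0$ and the trace-weighted priors no longer exist. One shared (harmless) fine point: both you and the paper use the defining property of ${\cal W}_{\epsilon}$ at $w$ equal to the infimum; for finite metric spaces the balls $O_d(x,w)$ are closed and piecewise constant in $w$, so the infimum is attained and the step is legitimate, but it deserves the one-line remark.
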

\begin{proof}
Because  
$\Omega_A$ and $\Omega_B$ are finite sets ($|\Omega_A|, |\Omega_B|\leq N$), 
we employ (\ref{defdiscrete}) to discuss the 
error bar width. 
For each $x\in \Omega_A$, a state $\rho_x$ is chosen 
so as to satisfy $\rho_x^A(\{x\})=1$. 
(Note that $\rho_x$ is not uniquely determined in general, because 
$A_x$ may be degenerate.)
Let $\{p_A(x)\}_{x\in \Omega_A}$ be a probability 
distribution on $\Omega_A$. 
Suppose that we prepare $\rho_x$ with 
probability $p_A(x)$ and then measure $M$. 
We denote by $p_{M|A}(x',y'|x)$ 
the conditional probability to obtain $(x',y') \in \Omega_A 
\times \Omega_B$ when we prepared $\rho_x$. 
According to the definition of ${\cal W}_{\epsilon_1}(M_1,A)$, 
for any $w_1 \geq {\cal W}_{\epsilon_1}(M_1,A)$ and for each $x\in \Omega_A$, 
it holds that 
\begin{eqnarray*}
\sum_{x'\in O_{d_A}(x,w_1)}\ 
\sum_{y'\in \Omega_B}p_{M|A}(x',y'|x)\geq 1-\epsilon_1. 
\end{eqnarray*}
This inequality implies 
\begin{eqnarray}
\sum_{x\in \Omega_A}\ 
\sum_{x'\in O_{d_A}(x,w_1)}
\sum_{y'\in \Omega_B}p_{M|A}(x',y'|x)p_A(x)\geq 1-\epsilon_1. 
\label{eqpo}
\end{eqnarray}
Let us denote by $p_{A|M}(x|x',y')$ a posterior probability 
representing that the prepared state is $\rho_x$ 
when 
$(x',y')
\in \Omega_A\times \Omega_B$ is obtained by the measurement of $M$. 
According to the Bayes rule,  
$p_{M|A}(x',y'|x)p_A(x)=p^{A}_M(x',y')p_{A|M}(x|x',y')$ holds, 
where $p^{A}_M(x',y')$ is the probability to obtain 
outcome $(x',y')\in \Omega_A\times \Omega_B$. 
Thanks to the symmetry of the metric, 
the above inequality (\ref{eqpo}) can be rewritten as 
\begin{eqnarray}
\sum_{(x',y')\in \Omega_A\times \Omega_B}\ 
\sum_{x\in O_{d_A}(x',w_1)}
p^A_M(x',y')p_{A|M}(x|x',y')\geq 1-\epsilon_1. 
\label{Asiki}
\end{eqnarray} 
\par
Similarly, for each $y\in \Omega_B$, a state $\sigma_y$ is chosen 
so as to satisfy $\sigma_y^B(\{y\})=1$. 
Let $\{p_B(y)\}_{y\in \Omega_B}$ be a probability 
distribution on $\Omega_B$. 
Suppose that we prepare $\sigma_y$ with 
probability $p_B(y)$ and then measure $M$. 
We denote by $p_{M|B}(x',y'|y)$ 
the conditional probability to obtain $(x',y') \in \Omega_A 
\times \Omega_B$ when we prepared $\sigma_y$. 
For any $w_2\geq {\cal W}_{\epsilon_2}(B,M_2)$, it holds that
\begin{eqnarray*}
\sum_{x'\in \Omega_A}\ 
\sum_{y'\in O_{d_B}(y,w_2)}
p_{M|B}(x',y'|y)\geq 1-\epsilon_2. 
\end{eqnarray*}
We thus obtain 
\begin{eqnarray}
\sum_{(x',y')\in \Omega_A\times \Omega_B}\ 
\sum_{y\in O_{d_B}(y',w_2)}
p^B_M(x',y')p_{B|M}(y|x',y')\geq 1-\epsilon_2,   
\label{Bsiki}
\end{eqnarray} 
where $p^{B}_M(x',y')$ is the probability to obtain 
$(x',y')\in \Omega_A\times \Omega_B$ by measuring $M$ 
in this protocol, 
and $p_{B|M}(y|x',y')$ denotes a posterior probability representing 
that the prepared state is $\sigma_y$ 
when $(x',y')
\in \Omega_A\times \Omega_B$ is obtained by measuring $M$. 
\par
While the above two probabilities $p^A_M(x',y')$ and 
$p^B_M(x',y')$ are different in general, 
they coincide with each other if $\sum_{x\in \Omega_A}
p_A(x)\rho_x =\sum_{y \in \Omega_B}p_B(z)\sigma_z$ is satisfied. 
We construct such a pair of ensembles below. 
\par
We employ the remote state preparation technique. 
${\cal H}={\bf C}^N$ has an orthonormalized basis 
$\{{\bf e}_i\}$ $(i=1,2,\ldots,N)$. 
Let us consider an auxiliary $N$-dimensional Hilbert space 
${\cal H}_{aux}$ 
and the composite system ${\cal K}:={\cal H}_{aux} \otimes {\cal H}$. 
We denote by $\{{\bf a}_i\}_{i=1,2,\ldots,N}$ an orthonormalized basis of ${\cal H}_{aux}$. 
A normalized vector $\Phi_0:=\frac{1}{\sqrt{N}}
\sum_i {\bf a}_i \otimes {\bf e}_i$ is a maximally entangled state 
over the composite system. 
We define an antiunitary operator $J :{\cal K} \to {\cal K}$ by
\begin{eqnarray*}
J(\sum_i \sum_j c_{ij}{\bf a}_i \otimes {\bf e}_j)
=\sum_i \sum_j \overline{c_{ij}}{\bf a}_j\otimes {\bf e}_i, 
\end{eqnarray*}
which satisfies $J^2={\bf 1}$. 
It is easy to see that 
for an arbitrary operator $X\in {\bf B}({\cal H})$, 
$({\bf 1}\otimes X)\Phi_0
=J({\bf 1}\otimes X^*)J\Phi_0 $ holds.  
Moreover,  $\tilde{X}=\sum_{i,j}({\bf e}_j,X{\bf e}_i)|{\bf a}_i\rangle
\langle {\bf a}_j| \in {\bf B}({\cal H}_{aux})$ satisfies 
$J({\bf 1}\otimes X^*)J =\tilde{X}\otimes {\bf 1}$. 
This correspondence defines a map $\gamma:
{\bf B}({\cal H}) \to {\bf B}({\cal H}_{aux})$ 
by $\gamma(X)=\tilde{X}$. 
This map $\gamma$ is a bijective linear map 
from ${\bf B}({\cal H})$ to ${\bf B}({\cal H}_{aux})$. 
It maps the identity operator ${\bf 1}$ on ${\cal H}$ to 
the identity operator ${\bf 1}$ on ${\cal H}_{aux}$. 
In addition, it is a positive map. That is,   
$\gamma(X)\geq 0$ holds for $X\geq 0$. 
Therefore we can define its dual $\gamma^*$ 
as a map from ${\cal S}({\cal H}_{aux})$ 
to ${\cal S}({\cal H})$ satisfying 
$\mbox{tr}(\rho \gamma(X))=\mbox{tr}(\gamma^*(\rho)X)$ 
for all $X\in {\bf B}({\cal H})$. 
In addition, $\gamma(X)$ becomes 
a projection for projection $X$.  
\par
For PVM $A=\{A_x\}$, we 
define $p_A(x):=(\Phi_0, ({\bf 1}\otimes A_x)\Phi_0)$ 
and 
\begin{eqnarray*}
\rho_x:=\frac{\mbox{tr}_{{\cal H}_{aux}}(({\bf 1}\otimes A_x)|\Phi_0
\rangle \langle \Phi_0|({\bf 1}\otimes A_x))}
{p_A(x)},
\end{eqnarray*}
where $\mbox{tr}_{{\cal H}_{aux}}$ represents 
a partial trace with respect to ${\cal H}_{aux}$. 
It is easy to see that this state satisfies $\rho_x^A(\{x\})=1$. 
In addition, $\sum_{x}p_A(x)\rho_x=\frac{{\bf 1}}{N}$ holds. 
In fact, as $\{\gamma(A_x)\}_{x\in \Omega_A}$ becomes a PVM on ${\cal H}_{aux}$, 
we obtain for any operator $X$ on ${\cal H}$, 
\begin{eqnarray*}
\sum_{x\in \Omega_A}p(x)\mbox{tr}(\rho_x X)
=\sum_{x\in \Omega_A}( \Phi_0,
(\gamma(A_x)\otimes X)\Phi_0)
=( \Phi_0, ({\bf 1}\otimes X) \Phi_0)
=\mbox{tr}\left(\frac{{\bf 1}}{N}X
\right).
\end{eqnarray*}
Similarly, for each $y\in \Omega_B$ we put 
$p_B(y)=(\Phi_0, ({\bf 1}\otimes B_y)\Phi_0)$ and 
\begin{eqnarray*}
\sigma_y:=\frac{\mbox{tr}_{{\cal H}_{aux}}(({\bf 1}\otimes B_y)|\Phi_0
\rangle \langle \Phi_0|({\bf 1}\otimes B_y))}
{p_B(y)},
\end{eqnarray*}
which satisfies $\sum_{y\in \Omega_B}p_B(y)\sigma_y=\frac{\bf 1}{N}$. 
\par
Because the ensembles constructed above satisfy 
$\sum_{x\in \Omega_A}p_A(x)\rho_x=
\sum_{y\in \Omega_B}p_B(y)\sigma_z$,  
$p^A_M(x,y)=p^B_M(x,y)=:p_M(x,y)$ holds for each $(x,y)\in \Omega_A
\times \Omega_B$. 
Therefore, (\ref{Asiki}) and (\ref{Bsiki}) 
can be combined to show 
\begin{eqnarray*}
\sum_{(x',y')\in \Omega_A\times \Omega_B}
p_M(x',y')
\left( \sum_{x\in O_{d_A}(x',w_1)}
p_{A|M}(x|x',y')+
\sum_{y\in O_{d_B}(y',w_2)}
p_{B|M}(y|x',y')\right)
\geq 2-\epsilon_1-\epsilon_2. 
\end{eqnarray*}
This inequality implies that there exists 
$(x',y')\in \Omega_A\times \Omega_B$ satisfying $p_M(x',y')\neq 0$ and 
\begin{eqnarray}
\sum_{x\in O_{d_A}(x',w_1)}
p_{A|M}(x|x',y')+
\sum_{y\in O_{d_B}(y',w_2)}
p_{B|M}(y|x',y')
\geq 2-\epsilon_1-\epsilon_2. 
\label{nanda}
\end{eqnarray}
We rewrite the left-hand side of the above inequality below. 
Let us note that the joint probability $p_{M|A}(x',y'|x)p_A(x)$ 
can be expressed as 
\begin{eqnarray*}
p_{M|A}(x',y'|x)p_A(x)=p_A(x)\mbox{tr}(\rho_x M_{x'y'})
=(\Phi_0,(\gamma(A_x)\otimes M_{x'y'})\Phi_0).
\end{eqnarray*}
Because $p_M(x',y')=(\Phi_0, ({\bf 1}\otimes M_{x'y'}) \Phi_0)$ holds, 
using the Bayes rule we obtain 
\begin{eqnarray*}
p_{A|M}(x|x',y')=\frac{(\Phi_0,(\gamma(A_x)\otimes M_{x'y'})\Phi_0)}
{(\Phi_0,({\bf 1}\otimes M_{x'y'})\Phi_0)}.
\end{eqnarray*}
Defining a state $\Theta_{x'y'}$ of an auxiliary system ${\cal H}_{aux}$ 
by 
\begin{eqnarray*}
\Theta_{x'y'}:=\frac{\mbox{tr}_{\cal H}\left(
({\bf 1}\otimes M_{x'y'}^{1/2})|\Phi_0\rangle 
\langle \Phi_0|({\bf 1}\otimes M_{x'y'}^{1/2})
\right)}
{(\Phi_0,({\bf 1}\otimes M_{x'y'})\Phi_0)
}, 
\end{eqnarray*}
we obtain
$p_{A|M}(x|x',y')=\mbox{tr}(\Theta_{x'y'}\gamma(A_x))$.
Similarly, we obtain 
$p_{B|M}(y|x',y')=\mbox{tr}(\Theta_{x'y'}\gamma(B_y))$.
Thus (\ref{nanda}) can be written as 
\begin{eqnarray*}
\mbox{tr}\left(\Theta_{x'y'}\sum_{x
\in O_{d_A}(x',w_1)}
\gamma(A_x)
\right)
+
\mbox{tr}\left(\Theta_{x'y'}\sum_{y\in 
O_{d_B}(y',w_2)}
\gamma(B_y)
\right)
\geq 2-\epsilon_1-\epsilon_2.
\end{eqnarray*}
This can be rewritten as
\begin{eqnarray}
\mbox{tr}\left(\gamma^*(\Theta_{x'y'})
\sum_{x\in 
O_{d_A}(x',w_1)}
A_x
\right)
+
\mbox{tr}\left(\gamma^*(\Theta_{x'y'})
\sum_{y\in 
O_{d_B}(y',w_2)}
B_y
\right)
\geq 2-\epsilon_1-\epsilon_2,
\label{key}
\end{eqnarray}
where $\gamma^*:{\cal S}({\cal H_A})\to {\cal S}({\cal H})$ 
is the dual map of $\gamma$. 
Because each term of the left-hand side of (\ref{key}) 
is smaller than $1$, 
 we obtain an important observation:
for all $\epsilon_1,\epsilon_2\in [0,1]$ with $\epsilon_1+\epsilon_2\leq 1$ it holds 
that 
\begin{eqnarray*}
w_1 &\geq& W_{\epsilon_1+\epsilon_2}(\gamma^*(\Theta_{x'y'})^{A})
\\
w_2 &\geq& W_{\epsilon_1+\epsilon_2}(\gamma^*(\Theta_{x'y'})^{B}). 
\end{eqnarray*}
Because $w_1$ and $w_2$ are arbitrary numbers satisfying 
$w_1\geq {\cal W}_{\epsilon_1}(M_1,A)$ and $w_2\geq {\cal W}_{\epsilon_2}
(M_2,B)$, we obtain 
\begin{eqnarray}
{\cal W}_{\epsilon_1}(M_1,A)&\geq& W_{\epsilon_1+\epsilon_2}
(\gamma^*(\Theta_{x'y'})^{A}) \label{w1w1}\\
{\cal W}_{\epsilon_2}(M_2,B)&\geq& W_{\epsilon_1+\epsilon_2}
(\gamma^*(\Theta_{x'y'})^{B}) \label{w2w2}.
\end{eqnarray}
\end{proof}
Thanks to this theorem, we can estimate the limitation on  
joint measurement by using a limitation on joint localization. 
That is, if we have a bound on joint localization in the state-independent 
form, we can obtain a corresponding bound on joint measurement. 
\par
While the above theorem shows a directly 
applicable relationship, it is often possible to 
obtain a better bound by employing (\ref{key}) 
as a starting point. 
We obtain the following theorem. 
\begin{theorem}
Let ${\cal H}$ be an $N$-dimensional Hilbert space. 
Suppose that we have a pair of PVMs $A=\{A_x\}_{x\in \Omega_A}$ 
and $B=\{B_y\}_{y \in \Omega_B}$, where $(\Omega_A,d_A)$ 
and $(\Omega_B,d_B)$ are metric spaces.  
We consider a POVM $M=\{M_{xy}\}_{(x,y)\in \Omega_A\times \Omega_B}$ 
to perform an approximate joint measurement of $A$ and $B$.
$M_1$ and $M_2$ denote marginal POVMs obtained from $M$.  
The following statements hold. 
\\
(i)
For
$\epsilon_1, \epsilon_2\in [0,1]$ with $\epsilon_1 +\epsilon_2\leq 1$ and for 
any state $\rho$, the overall widths 
$W_{\epsilon_1}(\rho^A)$ and $W_{\epsilon_2}(\rho^B)$ 
satisfy
\begin{eqnarray*}
\max_{x\in\Omega_A} \left| O_{d_A}\left(x,W_{\epsilon_1}(\rho^A)
\right)
\right|
\cdot
\max_{y \in \Omega_B}\left|
O_{d_B}\left(y,W_{\epsilon_2}(\rho^B)\right)
\right|\geq
\frac{(1-\epsilon_1-\epsilon_2)^2}{ \max_{x\in \Omega_A,
y\in \Omega_B}\mbox{tr}( A_x B_y )}, 
\end{eqnarray*}
where $\left| X \right|$ means the counting measure (cardinality) of the set $X$. 
\\
(ii)
For
$\epsilon_1, \epsilon_2\in [0,1]$ with $\epsilon_1+\epsilon_2\leq 1$, 
the error bar widths ${\cal W}_{\epsilon_1}(M_1,A)$
and ${\cal W}_{\epsilon_2}(M_2,B)$ satisfy
\begin{eqnarray*}
\max_{x\in\Omega_A} \left| O_{d_A}\left(x,{\cal W}_{\epsilon_1}
(M_1,A)
\right)
\right|
\cdot
\max_{y \in \Omega_B}\left|
O_{d_B}\left(y,{\cal W}_{\epsilon_2}(M_2,B)\right)
\right|\geq
\frac{(1-\epsilon_1-\epsilon_2)^2}{ \max_{x\in \Omega_A,
y\in \Omega_B}\mbox{tr}( A_x B_y )}. 
\end{eqnarray*}
\end{theorem}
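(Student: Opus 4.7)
The plan is to combine the Landau--Pollak uncertainty relation (already quoted in the introduction of the paper) with the elementary bound $\|PQ\|^2\leq \mathrm{tr}(PQ)$ for two orthogonal projections $P,Q$, which turns an operator-norm inequality into a sum over PVM elements and thereby produces the factor $\max_{x,y}\mathrm{tr}(A_xB_y)$. Both (i) and (ii) follow the same blueprint; they differ only in which state and which starting inequality one feeds into Landau--Pollak.

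\textbf{Part (i).} For any $w_1>W_{\epsilon_1}(\rho^A)$, the definition of the overall width gives a center $x_0\in\Omega_A$ with $\mathrm{tr}(\rho A(\Delta_A))\geq 1-\epsilon_1$, where $\Delta_A:=O_{d_A}(x_0,w_1)$; pick $y_0$ and $\Delta_B:=O_{d_B}(y_0,w_2)$ analogously. Adding the two trace inequalities and applying Landau--Pollak to the projections $A(\Delta_A)$ and $B(\Delta_B)$ yields $\|A(\Delta_A)B(\Delta_B)\|\geq 1-\epsilon_1-\epsilon_2$. Squaring and bounding the operator norm by the Hilbert--Schmidt norm gives
\begin{eqnarray*}
(1-\epsilon_1-\epsilon_2)^2 \leq \mathrm{tr}\bigl(A(\Delta_A)B(\Delta_B)\bigr) = \sum_{x\in \Delta_A}\sum_{y\in\Delta_B}\mathrm{tr}(A_xB_y) \leq |\Delta_A|\cdot|\Delta_B|\cdot\max_{x,y}\mathrm{tr}(A_xB_y).
\end{eqnarray*}
Finally one lets $w_1\downarrow W_{\epsilon_1}(\rho^A)$, $w_2\downarrow W_{\epsilon_2}(\rho^B)$ and replaces $|\Delta_A|$, $|\Delta_B|$ by the maxima over centers to recover the stated inequality.

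\textbf{Part (ii).} Rather than invoking Theorem \ref{th1} as a black box, I would reuse inequality (\ref{key}) from its proof, which already provides the exact analogue of the Landau--Pollak set-up needed. Setting $\rho:=\gamma^*(\Theta_{x'y'})$ and taking $\Delta_A:=O_{d_A}(x',w_1)$, $\Delta_B:=O_{d_B}(y',w_2)$ with $w_1>\mathcal{W}_{\epsilon_1}(M_1,A)$ and $w_2>\mathcal{W}_{\epsilon_2}(M_2,B)$, equation (\ref{key}) is precisely $\mathrm{tr}(\rho A(\Delta_A))+\mathrm{tr}(\rho B(\Delta_B))\geq 2-\epsilon_1-\epsilon_2$. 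From this point the Landau--Pollak step and the Hilbert--Schmidt estimate of part (i) apply verbatim, and one lets $w_i\downarrow \mathcal{W}_{\epsilon_i}$ to conclude.

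\textbf{Main obstacle.} The mechanical steps are light; the real content is the observation that for two projections the operator norm admits the clean expansion $\|PQ\|_{\mathrm{HS}}^2=\sum\mathrm{tr}(A_xB_y)$, which is what dictates the shape of the right-hand side. The only technical subtlety is that the overall width and error bar width are defined as infima and need not be attained, so $\Delta_A$ has to be built from a slightly oversized radius $w_1>W_{\epsilon_1}$. Since the space $\Omega_A$ is finite (${\cal H}$ being finite-dimensional), the family $w\mapsto O_{d_A}(x_0,w)$ is piecewise constant, and a standard subsequence argument shows that for a suitable $x_0$ the ball $O_{d_A}(x_0,w_1)$ stabilizes to $O_{d_A}(x_0,W_{\epsilon_1}(\rho^A))$ as $w_1\downarrow W_{\epsilon_1}(\rho^A)$, giving $|\Delta_A|\leq \max_{x}|O_{d_A}(x,W_{\epsilon_1}(\rho^A))|$.
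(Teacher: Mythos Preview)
Your proposal is correct and follows essentially the same route as the paper: Landau--Pollak applied to $A(\Delta_A),B(\Delta_B)$, then the Hilbert--Schmidt bound $\Vert PQ\Vert^2\leq\mathrm{tr}(PQ)=\sum_{x,y}\mathrm{tr}(A_xB_y)$, and for (ii) starting from inequality~(\ref{key}) rather than Theorem~\ref{th1} as a black box. Your treatment of the infimum via $w_i\downarrow W_{\epsilon_i}$ is in fact slightly more careful than the paper, which simply plugs in $w_i=W_{\epsilon_i}$ directly (legitimate here because the balls in a finite metric space are right-continuous in the radius).
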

\begin{proof}
(i) 
We employ 
a strategy of the proof for (\ref{QPlocal}) in \cite{BuschUncertainty}. 
According to the Landau-Pollak uncertainty relation, 
for any state $\rho$ it holds that
\begin{eqnarray*}
\sum_{x'\in O_{d_A}(x,W_{\epsilon_1}(\rho))}
\mbox{tr}(\rho A_{x'})
+
\sum_{y'\in O_{d_B}(y,W_{\epsilon_2}(\rho))}
\mbox{tr}(\rho B_{y'})
\leq 1 +\left\Vert 
\sum_{x'\in O_{d_A}(x,W_{\epsilon_1}(\rho))} A_{x'}
\sum_{y'\in O_{d_B}(y,W_{\epsilon_2}(\rho))} B_{y'}
\right\Vert.
\end{eqnarray*}
Because $\Vert a\Vert \leq \mbox{tr}(a^*a)^{1/2}$ holds for 
an arbitrary operator $a$, 
we obtain 
\begin{eqnarray*}
\left\Vert 
\sum_{x'\in O_{d_A}(x,W_{\epsilon_1}(\rho))} A_{x'}
\sum_{y'\in O_{d_B}(y,W_{\epsilon_2}(\rho))} B_{y'}
\right\Vert
&\leq& 
\left(
\sum_{x'\in O_{d_A}\left(x,W_{\epsilon_1}(\rho^A)\right)}
\sum_{y'\in O_{d_B}\left(y,W_{\epsilon_2}(\rho^B)\right)}
\mbox{tr}\left(B_{y'} A_{x'}B_{y'}\right)
\right)^{1/2}
\\
&\leq &
\max_{x\in\Omega_A} \left| O_{d_A}\left(x,W_{\epsilon_1}(\rho^A)
\right)
\right|^{1/2}
\cdot
\max_{y \in \Omega_B}\left|
O_{d_B}\left(y,W_{\epsilon_2}(\rho^B)\right)
\right|^{1/2}
\\
&&
\cdot
\max_{x'\in \Omega_A,y'\in \Omega_B}
\mbox{tr}\left(A_{x'}B_{y'}
\right)^{1/2},
\end{eqnarray*}
where we used $B_{y_1}B_{y_2}=\delta_{y_1 y_2}B_{y_1}$. 
Because there exist $x\in \Omega_A$ and $y\in \Omega_B$ 
such that 
\begin{eqnarray*}
(1-\epsilon_1)+(1-\epsilon_2)
\leq 
\sum_{x'\in O_{d_A}(x,W_{\epsilon_1}(\rho))}
\mbox{tr}(\rho A_{x'})
+
\sum_{y'\in O_{d_B}(y,W_{\epsilon_2}(\rho))}
\mbox{tr}(\rho B_{y'})
\end{eqnarray*}
the claim (i) follows. 
\par
(ii)
Instead of applying Theorem \ref{th1} directly, 
we start with (\ref{key}) to obtain a better bound. 
Putting $w_1={\cal W}_{\epsilon_1}(M_1,A)$ and 
$w_2={\cal W}_{\epsilon_2}(M_2,B)$, we obtain 
\begin{eqnarray*}
\sum_{x\in 
O_{d_A}(x',{\cal W}_{\epsilon_1}(M_1,A))}
\mbox{tr}\left(\gamma^*(\Theta_{x'y'})
A_x
\right)
+
\sum_{y\in 
O_{d_B}(y',{\cal W}_{\epsilon_2}(M_2,B))}
\mbox{tr}\left(\gamma^*(\Theta_{x'y'})
B_y
\right)
\geq 2-\epsilon_1-\epsilon_2.
\end{eqnarray*}
Mimicking the argument in the proof of (i), we obtain the claim (ii).
\end{proof}
The above theorem can be applied to the following examples.
\begin{example}(A qubit)
\par
As the simplest example we consider a qubit described by a two-dimensional 
Hilbert space ${\cal H}={\bf C}^2$. We write a standard basis $|z\rangle$ 
for $z\in \{0,1\}$. Its conjugate basis is written as $|\overline{x}\rangle$, 
where $|\overline{0}\rangle:=\frac{1}{\sqrt{2}}(|0\rangle +|1\rangle)$ 
and $|\overline{1}\rangle :=\frac{1}{\sqrt{2}}(|0\rangle -|1\rangle)$. 
We consider a pair of PVMs $X=\{X_x\}_{x\in \Omega_X}
=\{|\overline{x}\rangle \langle \overline{x}|\}_{x\in \Omega_X}$ and 
$Z=\{Z_z\}_{z\in \Omega_Z}=\{|z\rangle \langle z|\}_{z\in \Omega_{Z}}$, 
both of which are defined on a metric space $\Omega_X=
\Omega_Z=\{0,1\}$. This space $\{0,1\}$ is equipped with 
a distance defined by $d(0,1)=d(1,0)=1$. The noncommutativity between 
$X$ and $Z$ is quantified as 
$\mbox{tr}(X_x Z_z)=\frac{1}{2}$ for each $x,z\in \{0,1\}$. 
Thus it holds for any $\epsilon_1,\epsilon_2\in [0,1]$ with 
$\epsilon_1 +\epsilon_2 \leq 1$ that 
\begin{eqnarray}
\max_{z\in \Omega_Z} \left|
O_d \left(z, W_{\epsilon_1}(\rho^Z)\right)
\right| \cdot 
\max_{x\in \Omega_X}\left| 
O_d \left(
x,W_{\epsilon_2}(\rho^X)
\right)
\right|
\geq 2(1-\epsilon_1-\epsilon_2)^2. 
\label{qq}
\end{eqnarray}
For $\epsilon_1=\epsilon_2=0$, it follows that  
\begin{eqnarray*}
\max_{z\in \Omega_Z} \left|
O_d \left(z, W_{0}(\rho^Z)\right)
\right| \cdot 
\max_{x\in \Omega_X}\left| 
O_d \left(
x,W_{0}(\rho^X)
\right)
\right|
\geq 2. 
\end{eqnarray*}
This inequality implies that 
if $\max_{x\in \Omega_X} \left|
O_d \left(x, W_{0}(\rho^X)\right)
\right| =1$ holds, $\max_{z\in \Omega_Z}\left| 
O_d \left(
z,W_{0}(\rho^Z)
\right)
\right|=2$ must be satisfied. 
That is, there is no state that simultaneously yields a sharp 
localization for $X$ and $Z$. 
In addition, (\ref{qq}) implies that 
$\epsilon_1+\epsilon_2 \geq 1-\frac{1}{\sqrt{2}}$ must be satisfied 
so that both $\max_{z\in \Omega_Z} \left|
O_d \left(z, W_{\epsilon_1}(\rho^Z)\right)
\right|=1$ and $\max_{x\in \Omega_X}\left| 
O_d \left(
x,W_{\epsilon_2}(\rho^X)
\right)
\right|=1$ hold. 
That is, it holds for any state $\rho$ that 
\begin{eqnarray*}
\max_{z\in \Omega_Z}\rho^Z(\{z\})
+\max_{x\in \Omega_X}\rho^X(\{x\})
\leq 1+\frac{1}{\sqrt{2}}.
\end{eqnarray*}
This inequality agrees with the Landau-Pollak uncertainty relation. 
\par
Similarly, it holds for any POVM $M$ on $\Omega_Z
\times \Omega_X$ that 
\begin{eqnarray*}
\max_{z\in \Omega_Z} \left|
O_d \left(z, {\cal W}_{\epsilon_1}(M_1,Z)\right)
\right| \cdot 
\max_{x\in \Omega_X}\left| 
O_d \left(
x,{\cal W}_{\epsilon_2}(M_2,X)
\right)
\right|
\geq 2(1-\epsilon_1-\epsilon_2)^2. 
\end{eqnarray*}
This inequality implies that 
$\epsilon_1 +\epsilon_2 \geq 1-\frac{1}{\sqrt{2}}$ 
must be satisfied so that both 
$\max_{z\in \Omega_Z} \left|
O_d \left(z, {\cal W}_{\epsilon_1}(M_1,Z)\right)\right|=1
$
and 
$\max_{x\in \Omega_X}\left| 
O_d \left(
x,{\cal W}_{\epsilon_2}(M_2,X)
\right)
\right|=1
$
hold. 
Choose $\epsilon_1$ and $\epsilon_2$ so as to 
satisfy 
\begin{eqnarray*}
1-\epsilon_1 &=&\min_{z\in \Omega_Z}
\mbox{tr}\left(
|z\rangle \langle z| M_{1z}
\right)
\\
1-\epsilon_2 &=&\min_{x\in \Omega_X}
\mbox{tr}\left(
|x\rangle \langle x| M_{2x}
\right), 
\end{eqnarray*}
where $M_1$ and $M_2$ are marginal POVMs of $M$. 
Because $\mbox{tr}\left(|z\rangle \langle z| M_{1z} \right)
\geq 1-\epsilon_1$ is satisfied 
for each $z\in \Omega_Z$, 
$\max_{z\in \Omega_Z} \left|
O_d \left(z, {\cal W}_{\epsilon_1}(M_1,Z)\right)\right|=1
$ holds. 
Similarly, $\max_{x\in \Omega_X}\left| 
O_d \left(
x,{\cal W}_{\epsilon_2}(M_2,X)
\right)
\right|=1
$ follows. 
Thus it holds 
that 
\begin{eqnarray*}
\min_{z\in \Omega_Z}
\mbox{tr}\left(
|z\rangle\langle z| M_{1z}
\right)+
\min_{x\in \Omega_X}
\mbox{tr}\left(
|x\rangle \langle x| M_{2x}
\right)
\leq 1+\frac{1}{\sqrt{2}}.
\end{eqnarray*}
\end{example} 
\begin{example}(Position and momentum on a torus) 
\par
We consider a quantum particle confined in a torus ${\bf T^2}
=[0,2\pi) \times [0,2\pi)$ with $\hbar =\frac{2\pi}{N}$ for 
some integer $N\geq 2$. 
The system is described 
by a Weyl algebra (or rotation algebra) which is 
generated by $u$ and $v$ satisfying 
$u^*u=v^*v={\bf 1}$ and $uv=e^{i\hbar}vu$. 
The algebra can be represented on an $N$-dimensional Hilbert space.  
${\cal H}\simeq {\bf C}^N$ has an orthonormalized basis $\{|n\rangle\}_{n=0,1,\ldots,N}$.  
$u$ and $v$ act on ${\cal H}$ as
 $u|n\rangle=e^{i\frac{2\pi n}{N}}|n\rangle$ and 
$v|n\rangle =|n+1\rangle$. 
A PVM corresponding to position is 
$Q=\{|n\rangle \langle n|\}_{n\in \Omega_Q}$, 
and a PVM corresponding to momentum is 
$P=\{|\overline{k}\rangle \langle \overline{k}|\}_{k\in 
\Omega_P}$, where 
$\Omega_Q=\Omega_P=\{0,1,\ldots,N-1\}$ 
and $|\overline{k}\rangle =
\frac{1}{\sqrt{N}}\sum_{n=0}^{N-1} e^{-i\frac{2\pi kn}{N}}|n\rangle$. 
Their noncommutativity is expressed by 
$\mbox{tr}(|n\rangle \langle n|\overline{k}\rangle 
\langle \overline{k}|)=\frac{1}{N}$ for each $n\in \Omega_Q$ 
and $k\in \Omega_P$. 
A metric on $\Omega_Q$ is introduced by 
$d_Q (m,n):=\min\{\frac{2\pi}{N}(m-n+Nt)|t\in {\bf Z}, 
m-n+Nt \geq 0\}$, 
and a metric on $\Omega_P$ is introduced by 
$d_P (k,l):=\min\{\frac{2\pi}{N}(k-l+Ns)|s\in {\bf Z}, 
k-l+Ns \geq 0\}$. 
Because $|O_{d_Q}(m,w_1)|\leq \frac{w_1N}{2\pi}+1$ 
and $|O_{d_P}(n,w_2)|\leq \frac{w_2N}{2\pi}+1$ hold 
for any $m\in \Omega_Q$, $n\in \Omega_P$, and $w_1,w_2\geq 0$, 
it holds for any $\rho \in {\cal S}({\cal H})$, and $\epsilon_1,\epsilon_2
\in [0,1]$ with $\epsilon_1+\epsilon_2\leq 1$ that 
\begin{eqnarray*}
\left(W_{\epsilon_1}(\rho^Q)+\frac{1}{N}\right)\cdot
\left( W_{\epsilon_2}(\rho^P)+\frac{1}{N}\right)
\geq 2\pi 
\hbar
(1-\epsilon_1-\epsilon_2)^2. 
\end{eqnarray*}
In addition, it holds for any POVM $M=\{M_{nk}\}_{(n,k)\in \Omega_Q
\times \Omega_P}$ that 
\begin{eqnarray*}
\left({\cal W}_{\epsilon_1}(M_1,Q)+\frac{1}{N}\right)\cdot
\left( {\cal W}_{\epsilon_2}(M_2,P)+\frac{1}{N}\right)
\geq 2\pi
\hbar 
(1-\epsilon_1-\epsilon_2)^2. 
\end{eqnarray*}
\end{example}
\begin{example}(N-qubit) 
\par
Consider an $N$-qubit system described by a Hilbert space 
${\cal H}={\bf C}^2\otimes {\bf C}^2\otimes \cdots \otimes {\bf C}^2$ 
($N$ times). We write a standard basis 
$|z\rangle =|z_1\rangle \otimes |z_2
\rangle \otimes \cdots \otimes |z_N\rangle$ 
for $z=z_1z_2\cdots z_N\in \{0,1\}^N$.
Its conjugate basis is written as
$|\overline{x}\rangle =|
\overline{x_1}\rangle \otimes |\overline{x_2}\rangle 
\otimes \cdots \otimes |\overline{x_N}\rangle$, 
where $|\overline{0}\rangle:=\frac{1}{\sqrt{2}}
(|0\rangle +|1\rangle)$ 
and $|\overline{1}\rangle:=
\frac{1}{\sqrt{2}}
(|0\rangle -|1\rangle)$. 
We consider a pair of PVMs $X=\{X_x\}_{x\in \Omega_X}=\{|\overline{x}\rangle
\langle \overline{x}|\}_{x\in \Omega_X}$  and $Z=\{Z_z\}_{z\in \Omega_Z}
=\{|z\rangle \langle z|\}_{z\in \Omega_Z}$, both of which are 
defined on a metric space $\Omega_X=\Omega_Z=\{0,1\}^N$.  
While $\{0,1\}^N$ is
equipped with a natural distance called Hamming distance 
$d_H$, we employ its rescaled version $d$ defined as
$d(x_1,x_2):=\frac{d_H(x_1,x_2)}{2N}$ for $x_1,x_2\in \{0,1\}^N$. 
The noncommutativity between $X$ and $Z$ is quantified as 
$\mbox{tr}(X_x Z_z)=\frac{1}{2^N}$. 
For any $w\in [0,\frac{1}{4}]$ and $x\in \{0,1\}^N$, 
it holds that $|O(x, w)|=|\{y\in \{0,1\}^N| d_H(x,y) \leq wN\}| 
\leq 2^{Nh(w)}$, where $h(w):=-w\log w
-(1-w)\log (1-w)$. 
Thus it holds for any $\epsilon_1,\epsilon_2\in [0,1)$ 
with $\epsilon_1+\epsilon_2<1$ and 
for any state $\rho$ that 
\begin{eqnarray*}
h(W_{\epsilon_1}(\rho^Z))
+h(W_{\epsilon_2}(\rho^X))\geq 1+\frac{2}{N}\log (1-\epsilon_1
-\epsilon_2).
\end{eqnarray*}
In addition it holds for any POVM $M$ on $\Omega_Z\times \Omega_X$ that 
\begin{eqnarray*}
h({\cal W}_{\epsilon_1}(M_1,Z))
+h({\cal W}_{\epsilon_2}(M_2,X))\geq 1+\frac{2}{N}\log (1-\epsilon_1
-\epsilon_2).
\end{eqnarray*}
\end{example}
\section{Other measures for joint measurability}\label{others}
\subsection{Werner's geometric measure}
In \cite{Werner}, Werner discussed the uncertainty relation for joint measurement 
of position and momentum by employing Monge distance on a space of 
probability measures. While he defined a distance between a pair of 
POVMs on a general metric space, we treat only
 a finite metric space in this paper. 
His key idea is to consider the Lipshitz ball of a space of functions 
in order to take into consideration the metric structure 
of the space $(\Omega,d)$. 
The Lipshitz ball is defined by 
$\Lambda:= \{f:\Omega \to {\bf R}|\forall x, y \in \Omega,  |f(x)-f(y)|\leq d(x,y)\}$.  
He defined a distance between 
a pair of POVMs $F_1$ and $F_2$ on a metric space $(\Omega,d)$ by 
\begin{eqnarray*}
D_W(F_1,F_2)
&:=&\sup_{\rho\in {\cal S}({\cal H})}
\sup_{f\in\Lambda}|\mbox{tr}(\rho (F_1(f)-F_2(f))|\\
&=&\sup_{f\in \Lambda}\Vert F_1(f)-F_2(f)\Vert, 
\end{eqnarray*} 
where $F(f):=\sum_{x\in \Omega}F_x f(x)$ for a POVM $F$ and 
a function $f:\Omega\to {\bf R}$. 
\par
Busch and Pearson \cite{Buscherrorbar} proved the following relationship between 
the error bar width for finite $\epsilon$ and the geometric distance. 
(While their statement is restricted on observables on ${\bf R}$, 
the proof holds for observables on a general metric space $(\Omega,d)$. )
\begin{prop}(\cite{Buscherrorbar})
For any POVM $E_1$ on a (finite) metric space $(\Omega,d)$, 
a PVM 
$E$ on $(\Omega,d)$ and $\epsilon \in (0,1]$,  it holds that 
\begin{eqnarray*}
{\cal W}_{\epsilon} (E_1,E)\leq \frac{2}{\epsilon}D_W (E_1,E). 
\end{eqnarray*}
\end{prop}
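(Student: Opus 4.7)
The plan is to exhibit an explicit Lipschitz function whose action links the geometric distance $D_W(E_1,E)$ directly to the concentration requirement built into the error-bar width. Fix $x \in \Omega$ and a parameter $w > 0$, and set $f_x^w(y) := \min\{d(x,y),\, w/2\}$. The triangle inequality gives $|d(x,y_1) - d(x,y_2)| \leq d(y_1,y_2)$, and truncating at a constant cannot enlarge the Lipschitz constant (a short three-case check), so $f_x^w \in \Lambda$. By construction $f_x^w(x) = 0$, while $f_x^w(y) = w/2$ on the complement of the ball $O_d(x,w) = \{y : d(x,y) \leq w/2\}$.

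Now let $\rho$ satisfy $\rho^E(\{x\}) = 1$. Since $E$ is a PVM, $E(f_x^w) = \sum_y f_x^w(y) E_y$ acts on $\rho$ as multiplication by $f_x^w(x) = 0$, so $\mbox{tr}(\rho E(f_x^w)) = 0$. The definition of $D_W$ then yields
\[
\mbox{tr}(\rho E_1(f_x^w)) = \mbox{tr}(\rho(E_1(f_x^w) - E(f_x^w))) \leq \Vert E_1(f_x^w) - E(f_x^w)\Vert \leq D_W(E_1,E).
\]
Expanding the left-hand side as $\sum_y f_x^w(y)\rho^{E_1}(\{y\})$ and dropping the nonnegative terms indexed by $y \in O_d(x,w)$ leaves the Markov-type bound
\[
\frac{w}{2}\,\rho^{E_1}(\Omega \setminus O_d(x,w)) \leq D_W(E_1,E),
\]
that is, $\rho^{E_1}(O_d(x,w)) \geq 1 - \frac{2 D_W(E_1,E)}{w}$.

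For any $w > \frac{2}{\epsilon}D_W(E_1,E)$ this lower bound strictly exceeds $1-\epsilon$. Because the argument is uniform in $x \in \Omega$ and in every $\rho$ with $\rho^E(\{x\}) = 1$, such a $w$ is admissible in the discrete definition (\ref{defdiscrete}) of the error-bar width, and passing to the infimum yields ${\cal W}_\epsilon(E_1,E) \leq \frac{2}{\epsilon}D_W(E_1,E)$. The one mildly delicate step is confirming that the truncated distance function sits inside the Lipschitz ball $\Lambda$; everything else is a Chebyshev-type estimate recast in POVM language, so no ingredient beyond the definitions already in the excerpt is required.
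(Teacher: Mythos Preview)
Your argument is correct. The paper does not supply its own proof of this proposition; it simply quotes the result from Busch and Pearson \cite{Buscherrorbar} (noting that their proof, stated for observables on ${\bf R}$, carries over to a general metric space). Your route---choosing the truncated distance $f_x^w(y)=\min\{d(x,y),w/2\}$ as a Lipschitz test function and reading off a Markov-type tail bound---is exactly the natural adaptation of the Busch--Pearson argument to the finite-$\Omega$ setting, so there is nothing to contrast.
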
 
Combining this proposition with our theorem, we obtain the following. 
\begin{theorem}
Let ${\cal H}$ be an $N$-dimensional Hilbert space. 
Suppose that we have a pair of PVMs $A=\{A_x\}_{x\in \Omega_A}$ 
and $B=\{B_y\}_{y \in \Omega_B}$, where $\Omega_A$ (resp. $\Omega_B$) 
is equipped with a metric $d_A$ (resp. $d_B$). 
We consider a POVM $M=\{M_{xy}\}_{(x,y)\in \Omega_A\times \Omega_B}$ 
to perform an approximate joint measurement of $A$ and $B$.
$M_1$ and $M_2$ denote marginal POVMs obtained from $M$.  
For any $\epsilon_1, \epsilon_2 \in (0,1)$ with $\epsilon_1+\epsilon_2<1$, 
there exists $\rho\in {\cal S}({\cal H})$ satisfying
\begin{eqnarray*}
D_W(M_1,A)&\geq& \frac{\epsilon_1}{2}W_{\epsilon_1+\epsilon_2}
(\rho^{A}) \\
D_W(M_2,B)&\geq& \frac{\epsilon_2}{2}W_{\epsilon_1+\epsilon_2}
(\rho^{B}).
\end{eqnarray*}
\end{theorem}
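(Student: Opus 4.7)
The plan is to combine the two prior results stated in the excerpt, namely Theorem \ref{th1} (which relates the error bar widths of $M_1, M_2$ to the overall widths of some common state) and the preceding Proposition (which bounds the error bar width from above by Werner's geometric distance). Once these are put side by side, the claimed inequality follows by elementary manipulation of constants, so the real work was already done in the proof of Theorem \ref{th1}; here we just translate the bound from $\mathcal{W}_\epsilon$ into $D_W$.

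Concretely, I would first invoke Theorem \ref{th1} with the same parameters $\epsilon_1, \epsilon_2$ to obtain a single state $\rho \in \mathcal{S}(\mathcal{H})$ (the state $\gamma^*(\Theta_{x'y'})$ produced by the remote-state-preparation construction) such that
\begin{eqnarray*}
\mathcal{W}_{\epsilon_1}(M_1,A) &\geq& W_{\epsilon_1+\epsilon_2}(\rho^A), \\
\mathcal{W}_{\epsilon_2}(M_2,B) &\geq& W_{\epsilon_1+\epsilon_2}(\rho^B).
\end{eqnarray*}
Since the hypothesis $\epsilon_1, \epsilon_2 \in (0,1)$ with $\epsilon_1+\epsilon_2 < 1$ is strictly stronger than the hypothesis $\epsilon_1+\epsilon_2 \leq 1$ of Theorem \ref{th1}, the application is legitimate.

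Next, I would apply the Busch--Pearson proposition to each marginal: since $A$ and $B$ are PVMs,
$$\mathcal{W}_{\epsilon_1}(M_1,A) \leq \tfrac{2}{\epsilon_1} D_W(M_1,A), \qquad \mathcal{W}_{\epsilon_2}(M_2,B) \leq \tfrac{2}{\epsilon_2} D_W(M_2,B),$$
which uses $\epsilon_1, \epsilon_2 > 0$ so that the bound is well defined. Chaining these upper bounds with the lower bounds from Theorem \ref{th1} for the \emph{same} state $\rho$ yields
\begin{eqnarray*}
D_W(M_1,A) &\geq& \tfrac{\epsilon_1}{2} \mathcal{W}_{\epsilon_1}(M_1,A) \geq \tfrac{\epsilon_1}{2} W_{\epsilon_1+\epsilon_2}(\rho^A), \\
D_W(M_2,B) &\geq& \tfrac{\epsilon_2}{2} \mathcal{W}_{\epsilon_2}(M_2,B) \geq \tfrac{\epsilon_2}{2} W_{\epsilon_1+\epsilon_2}(\rho^B),
\end{eqnarray*}
which is exactly the claimed pair of inequalities, witnessed by a single $\rho$.

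There is essentially no obstacle here; the only subtlety worth flagging is that one must use a common state for both inequalities, and this is guaranteed because Theorem \ref{th1} already produces a single witness $\rho = \gamma^*(\Theta_{x'y'})$ that simultaneously bounds both marginal error bar widths. If one only knew Theorem \ref{th1} in the weaker form where each marginal is bounded by an overall width of some state that is allowed to depend on the marginal, the conclusion would still hold but the ``there exists $\rho$'' in the statement would need to be replaced by ``there exist $\rho_A, \rho_B$.''
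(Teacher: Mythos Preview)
Your proof is correct and is exactly the argument the paper has in mind: the theorem is stated immediately after the Busch--Pearson proposition with the one-line justification ``Combining this proposition with our theorem, we obtain the following,'' and your proposal spells out precisely that combination. Your remark that the single witness state $\rho=\gamma^*(\Theta_{x'y'})$ from Theorem~\ref{th1} is what makes the two inequalities hold simultaneously is the only point worth noting, and you have it right.
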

\subsection{$l_{\infty}$ distance}
In \cite{MiyaHeisen}, 
Miyadera and Imai discussed the uncertainty relation for 
a pair of general discrete POVMs by employing $l_{\infty}$ distance 
on a space of probability distributions. 
For POVMs $E^1=\{E^1_x\}_{x\in \Omega}$ and $E^2=\{E^2_x\}_{x\in \Omega}$, 
their $l_{\infty}$ distance is defined by 
\begin{eqnarray*}
D_{\infty}(E^1,E^2)&:=&\sup_{\rho} \max_{x\in \Omega}
|\mbox{tr}(\rho(E^1_x -E^2_x))|
\\
&=&\max_{x\in \Omega} \Vert E^1_x -E^2_x\Vert. 
\end{eqnarray*}
As this measure does not consider a metric structure, 
it is not related to the overall width directly. 
Instead, we need the following quantity characterizing 
joint localizability. 
\par
For any probability distribution on a finite set $\Omega$, 
we define its minimum localization error by 
\begin{eqnarray*}
LE(p):=1- \max_{x\in \Omega} p(x).
\end{eqnarray*}
This quantity is vanishing if and only if there exists 
$x\in \Omega$ such that $p(x)=1$. 
For POVM $A=\{A_x\}_{x\in \Omega}$ 
and state $\rho\in {\cal S}({\cal H})$,  
$LE(\rho^A)$ characterizes the localization property 
of $\rho^A$. We have the following theorem. 
\begin{theorem}\label{thimaichi}
Let ${\cal H}$ be an $N$-dimensional Hilbert space. 
Suppose that we have a pair of PVMs $A=\{A_x\}_{x\in \Omega_A}$ 
and $B=\{B_y\}_{y \in \Omega_B}$. 
We consider a POVM $M=\{M_{xy}\}_{(x,y)\in \Omega_A\times \Omega_B}$ 
to perform an approximate joint measurement of $A$ and $B$.
$M_1$ and $M_2$ denote marginal POVMs obtained from $M$.  
\par
There exists $\rho \in {\cal S}({\cal H})$ satisfying 
\begin{eqnarray*}
D_{\infty}(M_1,A)+D_{\infty}(M_2,B)\geq LE(\rho^A)+LE(\rho^B). 
\end{eqnarray*}
\end{theorem}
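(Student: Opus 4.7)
The plan is to recycle the remote state preparation machinery developed in the proof of Theorem \ref{th1}, but to replace the metric ball bound by a direct operator-norm estimate coming from the definition of $D_\infty$. Specifically, I will use the same ensembles $\{p_A(x),\rho_x\}_{x\in\Omega_A}$ and $\{p_B(y),\sigma_y\}_{y\in\Omega_B}$ built from the maximally entangled vector $\Phi_0$, so that $\rho_x^A(\{x\})=1$, $\sigma_y^B(\{y\})=1$, and the two ensembles share the average state ${\bf 1}/N$ (hence produce the same $M$-distribution $p_M(x',y')$).

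First, since $\rho_x^A(\{x\})=1$ means $\mathrm{tr}(\rho_x A_x)=1$, and $\|M_{1,x}-A_x\|\le D_\infty(M_1,A)$, I get the pointwise bound
\begin{eqnarray*}
\mathrm{tr}(\rho_x M_{1,x})\;\ge\;1-D_\infty(M_1,A)
\end{eqnarray*}
for every $x\in\Omega_A$. Weighting by $p_A(x)$ and applying Bayes' rule as in the proof of Theorem \ref{th1}, this becomes
\begin{eqnarray*}
\sum_{(x',y')\in\Omega_A\times\Omega_B} p_M(x',y')\,p_{A|M}(x'\mid x',y')\;\ge\;1-D_\infty(M_1,A),
\end{eqnarray*}
and an analogous inequality with $B$, $y'$, $D_\infty(M_2,B)$ holds. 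Adding them and using $\sum_{(x',y')} p_M(x',y')=1$, there must exist some $(x',y')$ with $p_M(x',y')>0$ satisfying
\begin{eqnarray*}
p_{A|M}(x'\mid x',y')+p_{B|M}(y'\mid x',y')\;\ge\;2-D_\infty(M_1,A)-D_\infty(M_2,B).
\end{eqnarray*}

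Now I use the same identification from the proof of Theorem \ref{th1}: for the auxiliary state $\Theta_{x'y'}$ one has $p_{A|M}(x\mid x',y')=\mathrm{tr}(\Theta_{x'y'}\gamma(A_x))=\mathrm{tr}(\gamma^*(\Theta_{x'y'})A_x)$ and similarly for $B$. Setting $\rho:=\gamma^*(\Theta_{x'y'})\in{\cal S}({\cal H})$, the displayed inequality reads
\begin{eqnarray*}
\mathrm{tr}(\rho A_{x'})+\mathrm{tr}(\rho B_{y'})\;\ge\;2-D_\infty(M_1,A)-D_\infty(M_2,B).
\end{eqnarray*}
Bounding the left-hand side by $\max_x\rho^A(x)+\max_y\rho^B(y)$ and rearranging gives exactly $D_\infty(M_1,A)+D_\infty(M_2,B)\ge LE(\rho^A)+LE(\rho^B)$.

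I do not see a substantive obstacle: the whole proof is essentially a repackaging of Theorem \ref{th1}, with the metric balls $O_{d_A}(x',w_1)$ and $O_{d_B}(y',w_2)$ replaced by the singletons $\{x'\}$ and $\{y'\}$, and the error-bar inequality replaced by the elementary operator-norm estimate $|\mathrm{tr}(\rho_x(M_{1,x}-A_x))|\le\|M_{1,x}-A_x\|\le D_\infty(M_1,A)$. The only point needing minor care is making sure to select the same $(x',y')$ for both the $A$-term and the $B$-term, which is granted for free once the $p_M$-distributions are made to coincide via the remote state preparation construction.
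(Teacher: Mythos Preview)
Your proof is correct and follows essentially the same route as the paper's own argument: both use the remote state preparation ensembles from Theorem \ref{th1}, the pointwise estimate $\mathrm{tr}(\rho_x M_{1,x})\ge 1-D_\infty(M_1,A)$, averaging and Bayes' rule to produce $\sum_{x',y'}p_M(x',y')p_{A|M}(x'|x',y')\ge 1-D_\infty(M_1,A)$, and then a pigeonhole step to find a good $(x',y')$ whose auxiliary state $\gamma^*(\Theta_{x'y'})$ witnesses the bound. Your write-up is in fact slightly cleaner than the paper's sketch in that you state the pointwise bound for every $x$ explicitly, whereas the paper somewhat misleadingly says ``there exist $x,y$'' before immediately averaging over all of them.
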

\begin{proof}
We sketch the proof as it employs a technique similar to the proof of Theorem \ref{th1}.  
We use the quantities introduced there. 
Consider the maximally entangled state $\Phi_0
\in {\cal H}_{aux}\otimes {\cal H}$ and 
the remote state preparation by using $\{\gamma(A_x)\}_{x\in \Omega_A}$ 
and $\{\gamma(B_y)\}_{y\in\Omega_B}$ as in the proof of Theorem \ref{th1}.
We measure $M$ on ${\cal H}$. 
From the definition of $l_{\infty}$ distance, 
there exist $x\in \Omega_A$ and $y\in \Omega_B$ such that 
\begin{eqnarray*}
1-\sum_{y'\in\Omega_B} p_{M|A}(x,y'|x)&\leq& D_{\infty}(M_1,A)
\\
1-\sum_{x'\in\Omega_A} p_{M|B}(x',y|y)&\leq& D_{\infty}(M_2,B) 
\end{eqnarray*}
hold. 
It follows that 
\begin{eqnarray*}
1-\sum_{x\in \Omega_A} \sum_{y\in\Omega_B} p_{M|A}(x,y|x)
p_A(x)&\leq& D_{\infty}(M_1,A)
\\
1-\sum_{x\in\Omega_A}\sum_{y\in \Omega_B} p_{M|B}(x,y|y)p_B(y)
&\leq& D_{\infty}(M_2,B). 
\end{eqnarray*}
Thanks to the Bayes theorem, these inequalities imply 
\begin{eqnarray*}
1- \sum_{x,y}p_{A|M}(x|x,y)p_M(x,y)&\leq& D_{\infty}(M_1,A)\\
1-\sum_{x,y}p_{B|M}(y|x,y)p_M(x,y)&\leq& D_{\infty}(M_2,B). 
\end{eqnarray*}
Thus there exists $(x,y)\in \Omega_A\times \Omega_B$ such that 
\begin{eqnarray*}
(1-p_{A|M}(x|x,y))+(1-p_{B|M}(y|x,y))\leq D_{\infty}(M_1,A)+D_{\infty}(M_2,B). 
\end{eqnarray*}
As $p_{A|M}(x|x,y)=\mbox{tr}(\gamma^* (\Theta_{xy})A_x)$ 
and $p_{B|M}(y|x,y)=\mbox{tr}(\gamma^*(\Theta_{xy})B_y)$ hold, 
we obtain 
\begin{eqnarray*}
LE(\gamma^*(\Theta_{xy})^A)+LE(\gamma^*(\Theta_{xy})^B)
\leq D_{\infty}(M_1,A)+D_{\infty}(M_2,B). 
\end{eqnarray*}
This completes the proof. 
\end{proof}
As the right-hand side of the inequality in Theorem \ref{thimaichi} can be 
bounded by the Landau-Pollak uncertainty relation, 
we can obtain a bound for the $l_{\infty}$ distance. 
\begin{corollary}\label{corD}
Let ${\cal H}$ be an $N$-dimensional Hilbert space. 
Suppose that we have a pair of PVMs $A=\{A_x\}_{x\in \Omega_A}$ 
and $B=\{B_y\}_{y \in \Omega_B}$. 
We consider a POVM $M=\{M_{xy}\}_{(x,y)\in \Omega_A\times \Omega_B}$ 
to perform an approximate joint measurement of $A$ and $B$.
$M_1$ and $M_2$ denote marginal POVMs obtained from $M$.  
It holds that 
\begin{eqnarray*}
D_{\infty}(M_1,A)+D_{\infty}(M_2,B)\geq 1-\max_{(x,y)\in \Omega_A \times 
\Omega_B} \Vert A_x B_y\Vert. 
\end{eqnarray*}
\end{corollary}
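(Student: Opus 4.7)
The plan is to chain Theorem \ref{thimaichi} with the Landau--Pollak uncertainty relation, which was already cited in the introduction of the paper. Theorem \ref{thimaichi} supplies a state $\rho$ for which
\begin{eqnarray*}
D_{\infty}(M_1,A)+D_{\infty}(M_2,B)\geq LE(\rho^A)+LE(\rho^B),
\end{eqnarray*}
so it suffices to bound the right-hand side from below by $1-\max_{(x,y)}\Vert A_x B_y\Vert$, independently of $\rho$. This reduces the corollary to a purely state-dependent uncertainty inequality for the minimum localization error, which should follow directly from Landau--Pollak applied to singletons $\{x\}$ and $\{y\}$.

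The key step is the following. Rewrite $LE(\rho^A)+LE(\rho^B)=2-\max_{x}\rho^A(\{x\})-\max_{y}\rho^B(\{y\})$, and let $x^\ast, y^\ast$ attain these maxima. The Landau--Pollak inequality recalled in the introduction, applied to $\Delta_A=\{x^\ast\}$ and $\Delta_B=\{y^\ast\}$, yields
\begin{eqnarray*}
\rho^A(\{x^\ast\})+\rho^B(\{y^\ast\})\leq 1+\Vert A_{x^\ast} B_{y^\ast}\Vert\leq 1+\max_{(x,y)\in \Omega_A\times \Omega_B}\Vert A_x B_y\Vert.
\end{eqnarray*}
Rearranging gives $LE(\rho^A)+LE(\rho^B)\geq 1-\max_{(x,y)}\Vert A_x B_y\Vert$. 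Combining with the bound from Theorem \ref{thimaichi} finishes the proof.

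There is essentially no obstacle here: the corollary is really a two-line composition of a tool proved above (Theorem \ref{thimaichi}) with a standard inequality (Landau--Pollak). The only mild subtlety is making sure we are allowed to apply Landau--Pollak with the single pair $(x^\ast,y^\ast)$ rather than with the maximizer of $\Vert A_xB_y\Vert$; this is handled by the trivial upper bound $\Vert A_{x^\ast}B_{y^\ast}\Vert\leq \max_{(x,y)}\Vert A_xB_y\Vert$, which costs us nothing.
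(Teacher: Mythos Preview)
Your proof is correct and follows exactly the approach indicated by the paper: the corollary is stated immediately after Theorem \ref{thimaichi} with the remark that its right-hand side ``can be bounded by the Landau--Pollak uncertainty relation,'' and you have simply spelled out that one-line application. There is no difference in strategy or in the details.
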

It is interesting to compare this corollary with 
a bound obtained in \cite{MiyaHeisen}: 
\begin{eqnarray}
&&2D_{\infty}(M_1,A)D_{\infty}(M_2,B)
+D_{\infty}(M_1,A)+D_{\infty}(M_2,B)
+4 D_{\infty}(M_1,A)^{1/2} D_{\infty}(M_2,B)^{1/2}
\geq \max_{x,y}\Vert [A_x,B_y]\Vert. 
\label{miyah}
\end{eqnarray}
A simple example treating a qubit illustrates 
that Corollary \ref{corD} is better in some region 
and worse in other region than (\ref{miyah}). 
We will investigate their comparison in detail elsewhere. 
\section{discussion}
In this paper, we investigated the relationships between 
two aspects of quantum uncertainty: joint localizability 
and joint measurability. 
To characterize these properties, 
Busch and Pearson introduced the overall width and the error bar width. 
In Theorem \ref{th1}, we showed a simple relationship between them. 
Our theorem implies that if there is a bound on 
joint localizability, it is possible to obtain a 
similar bound on joint measurability. 
While in heuristic arguments, the bound for joint measurement is 
often substituted without any rationale by that for joint localizability,  
which is easier to derive, 
our result provides a reasonable basis for this 
substitution.  
In addition, as by-products, we proved trade-off inequalities representing 
the limitations to joint localizability and measurability for 
a pair of general PVMs in a finite-dimensional system.
In contrast to Busch and Pearson's derivation, 
we did not require any concrete form of 
$M$. 
\par
Although our result is applicable to a pair of general 
PVMs in finite-dimensional systems, it is impossible to 
treat a single particle on ${\bf R}$, which 
needs an infinite-dimensional Hilbert space. 
Because our proof of Theorem \ref{th1} mainly depends on 
the properties of vector $\Phi_0$ whose form is 
meaningless in an infinite dimensional Hilbert space, 
it is impossible to extend our theorem directly 
to the infinite-dimensional case. 
Nevertheless, as Keyl, 
Schlingemann and  Werner discussed \cite{Wernerinfinite}, 
some observable algebras yield a maximally entangled state
as a linear functional. In the GNS representation of 
such a state, one can define an antiunitary operator $J$ 
that works for the remote state 
preparation. 
In such a case, even if $\Omega_A$ and $\Omega_B$ are 
continuous, they can be treated almost similarly to the  
finite-dimensional case if $\Omega_A$ and $\Omega_B$ 
are compact. This, however, is not the case for a single particle moving 
on ${\bf R}$. Normally, its Hilbert space is described by 
${\cal H}=L^2({\bf R})$. 
One can define 
an EPR state as a linear functional over the Weyl algebra 
for two particles. 
Although the state cannot be represented 
as a density operator in the original Hilbert space, 
because of the Hahn-Banach theorem, the linear functional can be extended 
to ${\bf B}({\cal H}_{aux})\otimes {\bf B}({\cal H})$. 
Another problem arises from the non-compactness of 
${\bf R}$: because the EPR state is singular, 
for any compact set $\Delta \subset {\bf R}$, 
the expectation values of $Q(\Delta)$ and $P(\Delta)$ 
vanish. Therefore, 
depending on the probability, our argument cannot be 
justified. In fact, the expectation value of $Q(\Delta)$ for $\Delta\subset {\bf R}$ 
does not satisfy $\sigma$-additivity. 
To treat these issues, more 
mathematical detail is required. 
\par
Our theorem shows that if a pair of PVMs 
is not jointly localizable, the pair is not jointly measurable. 
In general, the opposite does not hold. 
Let us consider a three-dimensional Hilbert space 
${\bf C}^3={\bf C}\oplus {\bf C}^2$. We define a pair of PVMs $A$ and $B$ as follows. 
$A=\{A_0,A_1,A_2\}$ is defined by $A_0={\bf 1}_{\bf C}\oplus {\bf 0}$, 
$A_1=|0\rangle\langle 0|$ and $A_2=|1\rangle \langle 1|$. 
$B=\{B_0,B_1,B_2\}$ is defined by $B_0={\bf 1}_{\bf C}\oplus {\bf 0}$, 
$B_1=| \overline{0}\rangle \langle \overline{0}|$ and 
$B_2=|\overline{1}\rangle \langle \overline{1}|$, where 
$|\overline{0} \rangle :=\frac{1}{\sqrt{2}}(|0\rangle +|1\rangle)$ 
and $|\overline{1} \rangle :=\frac{1}{\sqrt{2}}(|0\rangle -|1\rangle )$. 
These observables are jointly localizable but are not jointly measurable. 
However, because this example is somewhat artificial, it would be interesting 
to pursue a suitable definition of joint localizability to overcome this unconformity. 
In addition, from a foundational viewpoint, 
it would be interesting to examine whether the relationship 
between joint localizability and measurability is characteristic 
in quantum theory. For instance, in a broad framework called quantum logic 
that enables us to treat general theories, we can construct 
a theory that yields joint measurability but does not have 
joint localizability. 
In fact, a theory described by the Fano plane 
\cite{Dvurecenskij}
has only one unique state 
that is not localized with respect to any observables. 
Any observables in this theory, however, are trivially jointly measurable. 
This example is rather artificial and could be avoided 
if one introduces additional natural conditions. 
We hope to discuss such problems elsewhere. 
\\
{\bf Acknowledgments:} 
I would like to thank an anonymous referee for helpful comments. 

\end{document}